\newcommand{\cleqn}{\setcounter{equation}{0}}
\newcommand{\clth}{\setcounter{theorem}{0}}
\newcommand {\sectionnew}[1]{\section{#1}\cleqn\clth}
\newtheorem{theorem}{Theorem}[section]
\newtheorem{lemma}[theorem]{Lemma}
\newtheorem{definition}[theorem]{Definition}
\newcommand{\p}{\mathbf{p}}
\def\({\left(}
\def\){\right)}
\def\[{\left[}
\def\]{\right]}
\begin{document}

\title[ncKP Hierarchy]{Explicit Flow Equations and Recursion Operator of the ncKP hierarchy}
\author{Jingsong He $^1$
, Junyi Tu $^2$, Xiaodong Li $^2$, Lihong Wang$^1$ }
\dedicatory {$^1$Department of Mathematics, NBU, Ningbo, 315211 Zhejiang, P.\ R.\ China \\
$^2$Department of Mathematics, USTC, Hefei, 230026 Anhui, P.\ R.\
China }

\thanks{$^*$ Corresponding author: hejingsong@nbu.edu.cn,jshe@ustc.edu.cn}
\texttt{}

\date{}

\begin{abstract}
The explicit expression of the flow equations of the noncommutative
Kadomtsev-Petviashvili(ncKP) hierarchy is derived. Compared with the flow equations of the KP hierarchy,
our result shows that the additional terms in the flow equations of the ncKP hierarchy indeed
consist of commutators of dynamical coordinates \{$u_i$\}. The recursion operator for the flow equations under
$n$-reduction is presented. Further, under $2$-reduction, we calculate a nonlocal recursion operator $\Phi(2)$
 of the noncommutative Korteweg-de Vries(ncKdV) hierarchy, which generates
a hierarchy of local, higher-order flows. Thus we solve the open problem proposed by
P.J. Olver and V.V. Sokolov(Commun.Math.Phys. 193 (1998), 245-268).
\end{abstract}

\maketitle

{\bf AMS classification(2010):} 35Q53,37K05,46L55

{\bf Keywords:}  ncKP hierarchy, flow equations, recursion operator, ncKdV hierarchy.\\
\allowdisplaybreaks

\sectionnew{Introduction}

The noncommutative integrable system is one of most important and attractive topics in mathematical physics. Some
early interesting results were contributed by P.D. Lax\cite{lax},
M. Wadati and K. Tamijo\cite{wadati1974}, F. Calogero and A. Degasperis
\cite{calogero} and other authors. At that time, matrix soliton equations,
such as  matrix KdV equation and matrix nonlinear Schr\"odinger
equation, were typical noncommutative systems.  The noncommutative KP
hierarchy(ncKP), since it was summarized by P. Etingtof, I.M. Gelfand
and V.S. Retakh \cite{Gelfand1}, has been studied extensively during the last decade
\cite{Gelfand2,Takasaki,olver,fokas,Kupershmid,Paniak,zheng1,zheng,hamanaka1,
hamanaka2,Hoissen1,Hoissen2,wadati,Treves,Nimmo,Nimmo2}. Many authors have discussed the ncKP hierarchy from different viewpoints, such as a generalized ncKP hierarchy\cite{fokas} derived by means of Hamiltonian theory over noncommutative rings, Lagrangian and Hamiltonian formalism \cite{Kupershmid}, the exact multi-soliton
solutions\cite{Paniak}, the B\"acklund transformation of the
noncommutative Gelfand-Dickey hierarchy\cite{zheng1}, the existence of infinite conserved laws\cite{hamanaka2}, the algebraic property\cite{Treves}, quasi-determinant solutions of the noncommutative modified KP equations generated by Darboux transformation\cite{Nimmo,Nimmo2}. In particular,  the supersymmetric KP hierarchy, which is one special ncKP hierarchy, has been studied from perspectives of Lax equations, Hamiltonian structures and the solvability by Y. Manin and M. Mulase\cite{manin,mulase}.

Up to now, research results show that most of the integrable
properties, such as Lax equations, the commutative property of the flows, soliton solutions, Hamiltonian structures, the existence of infinite many conserved laws, can be generalized to the ncKP hierarchy\cite{hamanaka1}. But these generalized results do not mean that the ncKP hierarchy is a trivial formalism generalization of the KP hierarchy. Indeed, the concept of quasi-determinant was crucial to express the exact multi-soliton solutions of ncKP hierarchy \cite{Gelfand2}. Compared with the KP hierarchy, one can expect intuitively that there may be some additional terms consisting of commutators in the flow equations of the ncKP hierarchy. But it is not easy to determine the concrete form of commutators appearing in the flow equations. In \cite{Hoissen2}, some flow equations of Moyal-deformation KdV hierarchy are derived, but the results cannot lead to a general conclusion on the flow equations of the ncKP hierarchy. What is worse, the existence or nonexistence of $\tau$ function for ncKP hierarchy is still an open problem
\cite{Kupershmid}, see page 220.

In view of the pivotal position of the recursion
operator\cite{olver1977,olverbook,magri,fokas1,fokas2,fokas3,fokas4,sokolov,Oevel} in the theory of integrable system, it is very natural to extract recursion operator of the ncKP hierarchy under $n$-reduction from the explicit
flow equations.  This idea is a noncommutative analogue of the corresponding method for the KP hierarchy given by W. Strampp and W. Oevel \cite{Oevel}. The recursion operator of the $n$-reduction KP hierarchy is essential for obtaining the higher-order flows and symmetries starting from the lower-order ones. The construction of recursion operator in 2+1 dimension was the most important open problem in
the algebraic theory of integrable PDEs in the early 80s. This problem was solved in the papers by Fokas and Santini
\cite{fokas1,fokas2,fokas3}. A comprehensive review can be found in \cite{fokas4}. There are several ways to construct the recursion operator of a given integrable system. V.V. Sokolov et al. provided a comprehensive survey about constructions of recursion operator
in \cite{sokolov}. They also presented a way to
construct the recursion operator of the Gelfand-Dickey hierarchy from Lax representation with the help of one ansatz $\tilde{A}=PA+R$. According to the classification in \cite{sokolov}, W. Strampp-W. Oevel method belongs to the methodology of using Lax representation although it was not cited in \cite{sokolov}.
The advantage of this method is that the higher-order flows generated by recursion operator from lower-order ones are local even if the recursion operator has nonlocal term, because the higher-order flows are
automatically identified with the local flows given by the Lax equations.

 Another efficient way to construct recursion operator is to determine the Hamiltonian
 operator $\theta_{1}$ and $\theta_{2}$ of the equation under consideration, then the recursion
 operator is given by $\Re=\theta_{2}\theta_{1}^{-1}$ \cite{magri}. This approach was applied
 to the noncommutative Korteweg-de Vries(ncKdV) equation by P.J. Olver and V.V. Sokolov in \cite{olver}. They obtained a nonlocal recursion operator, and then
proposed an open problem that their recursion operator produces a hierarchy of local, higher-order flows, which is still not proved in literatures to the best of our knowledge.  I.Y. Dorfman and A.S. Fokas also derived a nonlocal recursion operator (see Eq.(28) of \cite{fokas}) to
generate  a generalized ncKP hierarchy. Recently, S. Carillo and C. Schiebold\cite{Carillo1} used this recursion operator to obtain the noncommutative potential KdV hierarchy and derived its ``operator soliton''. So it is necessary to give a positive answer to this open problem in order to ensure the local property of the flows generated by
the nonlocal recursion operator.

The purposes of this paper are twofold. 
The first is to prove the conjecture on the additional
commutators in the flow equations of the ncKP hierarchy by presenting the explicit expression of the flow equations. The second is to provide an affirmative solution to the open problem proposed by P.J. Olver and V.V. Sokolov. To this end we shall first calculate the recursion operator of the ncKdV
hierarchy from the general formula of the $n$-reduction ncKP
hierarchy, then after a rescaling we recover the
recursion operator obtained by P.J. Olver and V.V. Sokolov in \cite{olver}. Since we have known that the higher-order flows of the ncKdV hierarchy are local from its explicit expression before obtaining the recursion operator, there is no doubt that the recursion operator produces
a hierarchy of local flows. There are two key steps in our calculation, one is to keep the noncommutative coefficients on the left side of $\partial$, and the other is to
introduce a left-multiplication operator in the explicit expression
of the flow equations. 

The paper is organized as follows. In Section 2 we review the definition of Lax equations of the ncKP hierarchy
within the framework of Sato's theory\cite{sato}. In particular we give a terse formulation for the corresponding flow equations. The notion of $n$-reduction is also considered. Section 3 is devoted to the explicit expression of the flow equations of the  ncKP hierarchy, which are local flows obviously. In particular, the concrete forms of additional commutators on noncommutative dynamical
coordinates \{$u_i$\} are given. We also calculate some higher-order flow equations via the computer algebra. As examples, some noncommutative analogue of the well known  integrable equations including the KP
and KdV are derived. In Section 4 we present a recursion operator which maps the lower-order flow equations to higher-order flow
equations in the general assumption of $n$-reduction. The ncKdV and
fifth-order ncKdV equations are derived again by the recursion operator of the ncKdV hierarchy, which coincide with the equations given by
2-reduction in Section 3. By a rescaling, we
recover the recursion operator of the ncKdV hierarchy in \cite{olver}, then elucidate how the open problem proposed by P.J. Olver and V.V. Sokolov\cite{olver} is solved. Section 5 encompasses some conclusions and
discussions.

\sectionnew{ The KP hierarchy with noncommutative coefficients }
   In this section we study the KP hierarchy with noncommutative coefficients and show how the classical KP
hierarchy can be generalized to the ncKP hierarchy\cite{Gelfand1,Kupershmid}. The sub-hierarchy of $n$-reduction
and a terse formulation of the corresponding flow equations will be introduced.

     Let $(R,\partial)$ be a differential algebra defined over a field
  $k$ of characteristic 0 with the unity 1. Namely, $R$ is an
  associative $k$-algebra with unity and $\partial: R\rightarrow R$
  is a $k$-linear map satisfying the Leibnitz rule $\partial (fg)=\partial (f) g+f\partial
  (g)$ for all $f$ and $g$ in $R$. We do not assume commutativity of
  $R$. We also use the subscript notation for $\partial $ (regarded as derivative), for instant,
$u_x,u_{xx},\cdots $ to denote the first, second derivatives of
$u=u(x)$ and so on with respect to its scalar variable $x$. The commutator is
denoted by the standard bracket notation $$[u,v]=uv-vu.$$

   Following the classical KP hierarchy, we introduce a Lax operator as
   the first-order pseudo-differential operator:
\begin{eqnarray}L=\sum_{l\leq 1} u_{1-l}\partial^{l}=\partial+u_{2}\partial^{-1}+u_{3}\partial^{-2}+\cdots
,
\end{eqnarray}
where   $u_i\in R,i\in \mathbb{N},u_{0}=1,u_{1}=0$.
 $\partial^{-1} $ is a formal integration operator, satisfying
$\partial \partial^{-1}=\partial^{-1}\partial=1.$ The operation of
$\partial^{\nu}$ with $\nu \in \mathbb{Z}$ is given by the general Leibnitz
rule
\begin{eqnarray}
 \partial^{\nu}f=\sum_{j\geq 0} \binom{\nu}{j} f^{(j)} \partial^{\nu -j},
\end{eqnarray}
where $f\in R, f^{(j)}=\partial^{j} (f)$ and the binomial
coefficients $ \binom{\nu}{j}$ are defined
$$
\binom{\nu}{j}=\dfrac{\nu (\nu-1)(\nu-2)\cdots (\nu-(j-1))}{j!},j\geq 0.
$$


    We  denote
\begin{eqnarray}
L^{m} &=& (\partial+u_{2}\partial^{-1}+u_{3}\partial^{-2}+\cdots)^{m}\nonumber\\
      &=& \sum_{j\leq m} p_{j}(m)\partial^{j},
\end{eqnarray}
and
\begin{eqnarray}
 & &L_{+}^{m} = \sum_{j=0}^m
p_{j}(m)\partial^{j},\quad L_{-}^{m} =
\sum_{j< 0} p_{j}(m)\partial^{j}. \nonumber\\
& &L_{+}^{m} = \sum_{j=0}^m
\partial^{j}q_{j}(m),\quad L_{-}^{m} =
\sum_{j< 0} \partial^{j} q_{j}(m). \nonumber
\end{eqnarray}
It is easy to find that $p_{m}(m)=1,p_{m-1}(m)=0 $.
    For any fixed $m$, there is a one-to-one
correspondence between any two of the three infinite sets of
coordinates: ($u_2,u_3,\cdots $), ($p_{m-2}(m),p_{m-3}(m),\cdots $) and ( $q_{m-2}(m),q_{m-3}(m),\cdots $ ).
We usually call the first set ($u_2,u_3,\cdots $) dynamical coordinates of the ncKP hierarchy.


\begin{definition}The noncommutative KP hierarchy is defined by following Lax equations
\begin{eqnarray} \label{Laxeq}
 L_{t_{m}}=[L_{+}^{m},L],
\end{eqnarray}
which are infinite numbers of partial differential equations on dynamical coordinates $u_i(i\geq 2)$
with respect to infinite numbers of time variables $t=(t_1,t_2,t_3,\cdots)$.
\end{definition}
We have the following theorem for the specific flow equations.
\begin{theorem}
The flow equations of the ncKP hierarchy can be expressed as
\begin{eqnarray} \label{floweqwithp}
u_{j,t_{m}}=\sum_{h=1}^j A_{j,h}p_{-h}(m), j=2,3,4,\cdots
\end{eqnarray}
where 
\begin{eqnarray}
A_{j,h}=\sum_{r=0}^{j-h}\Big(\binom{1-r}{j-h-r}u_{r}\partial^{(j-h-r)}-
                           \binom{-h}{j-h-r}
                          \stackrel{\longleftarrow}{u_{r}^{(j-h-r)}}\Big)\\
 j=2,3,4,\cdots,1\leq h \leq j\;.\nonumber
 \end{eqnarray}
 Here the left-multiplication operator ``$\stackrel{\longleftarrow}{}$" is defined
as $\stackrel{\longleftarrow}{g} f=fg $.

\end{theorem}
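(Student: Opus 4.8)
The plan is to reduce the Lax equation \eqref{Laxeq} to a purely negative-order identity and then match coefficients. First I would use $[L^m,L]=0$ to rewrite $L_{t_m}=[L_+^m,L]=-[L_-^m,L]=LL_-^m-L_-^m L$. This form is convenient because $L_{t_m}=\sum_{j\geq 2}u_{j,t_m}\partial^{1-j}$ contains only the powers $\partial^{-1},\partial^{-2},\dots$ (as $u_0=1$ and $u_1=0$ are constant), so working with $L_-^m=\sum_{h\geq 1}p_{-h}(m)\partial^{-h}$ produces no positive-order terms to worry about. Writing also $L=\sum_{r\geq 0}u_r\partial^{1-r}$, the computation reduces to expanding the two ordered products.

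Next I would apply the general Leibnitz rule to commute the formal powers of $\partial$ to the right of the coefficients. In $LL_-^m$ this moves $\partial^{1-r}$ past $p_{-h}(m)$, giving terms $\binom{1-r}{s}\,u_r\,p_{-h}(m)^{(s)}\,\partial^{1-r-s-h}$; in $L_-^m L$ it moves $\partial^{-h}$ past $u_r$, giving $\binom{-h}{s}\,p_{-h}(m)\,u_r^{(s)}\,\partial^{1-r-s-h}$. Reading off the coefficient of $\partial^{1-j}$ forces $s=j-h-r$, and the constraints $r\geq 0,\ h\geq 1,\ s\geq 0$ confine the indices to $1\leq h\leq j$ and $0\leq r\leq j-h$ — precisely the ranges in the statement, the inner sum being empty when $h>j$.

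The only genuinely noncommutative point is the second family of terms. In $L_-^m L$ the coefficient $p_{-h}(m)$ remains on the \emph{left} of the derivative $u_r^{(j-h-r)}$, so that contribution is $p_{-h}(m)$ right-multiplied by a function, not an ordinary differential operator applied to $p_{-h}(m)$. I would therefore introduce the left-multiplication operator $\stackrel{\longleftarrow}{g}f=fg$, so that $\stackrel{\longleftarrow}{u_r^{(j-h-r)}}p_{-h}(m)=p_{-h}(m)\,u_r^{(j-h-r)}$; this lets me package both families uniformly as $A_{j,h}\,p_{-h}(m)$ with $A_{j,h}$ exactly as defined, completing the identification. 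In the commutative limit the left-multiplication and plain-derivative pieces could be freely merged, so this operator is the device that exposes the additional commutator structure.

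The main obstacle I anticipate is bookkeeping rather than conceptual: keeping each noncommutative factor in its correct left/right slot through every use of the Leibnitz rule, matching the two binomial coefficients $\binom{1-r}{j-h-r}$ and $\binom{-h}{j-h-r}$ to the right powers of $\partial$, and verifying the consistency check that the coefficients of $\partial^0$ and all positive powers cancel in $LL_-^m-L_-^m L$ (for instance, the $\partial^0$ term is $p_{-1}(m)-p_{-1}(m)=0$), so the right-hand side has the same negative-order range as $L_{t_m}$.
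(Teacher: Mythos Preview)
Your proposal is correct and follows essentially the same approach as the paper: rewrite $L_{t_m}=[L,L_-^m]$, expand both $LL_-^m$ and $L_-^mL$ via the generalized Leibnitz rule, reindex to match the coefficient of $\partial^{1-j}$, and introduce the left-multiplication operator to package the right-hand factor $p_{-h}(m)u_r^{(j-h-r)}$. The paper performs exactly this computation (with the index substitutions $\alpha=s$, $k=r+\alpha$, $j=h+k$) and notes, as you do, that the $j=1$ contribution vanishes since $A_{1,1}=0$.
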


\begin{proof}
 Since $$L_{t_{m}}=[L_{+}^{m},L]=[L,L_{-}^{m}],$$ substituting
$ L_{-}^{m} =\sum_{h> 0} p_{-h}(m)\partial^{-h}$ into the equation
above yields
\begin{eqnarray}
&&\mbox{\hspace{-0.4cm}}LL_{-}^{m}-L_{-}^{m}L\nonumber \\
&\mbox{\hspace{-0.4cm}}=\mbox{\hspace{-0.5cm}}&\sum_{r\geq 0}
u_{r}\partial^{1-r}\sum_{h>
                           0}p_{-h}(m)\partial^{-h}-\sum_{h> 0}p_{-h}(m)
                              \partial^{-h}\sum_{r\geq 0} u_{r}\partial^{1-r}\nonumber\\
                      &\mbox{\hspace{-0.8cm}}=\mbox{\hspace{-0.6cm}}&\sum_{r\geq 0}\sum_{h> 0}\sum_{\alpha\geq 0}\Big(\binom{1-r}{\alpha} u_{r}p_{-h}^{(\alpha)}(m)-
                           \binom{-h}{\alpha} p_{-h}(m)u_{r}^{(\alpha)}\Big)\partial^{1-r-h-\alpha} \nonumber\\
                      &\mbox{\hspace{-0.8cm}}=\mbox{\hspace{-0.6cm}}&\sum_{k\geq 0}\sum_{r= 0}^k\sum_{h>0}\Big(\binom{1-r}{k-r} u_{r}p_{-h}^{(k-r)}(m)-
                           \binom{-h}{k-r} p_{-h}(m)u_{r}^{(k-r)}\Big)\partial^{1-h-k}\nonumber\\
                      &\mbox{\hspace{-2cm}}=\mbox{\hspace{-2.4cm}}&\sum_{j\geq 1}\sum_{h=1 }^j\sum_{r= 0}^{j-h}\Big(\binom{1-r}{j-h-r} u_{r}p_{-h}^{(j-h-r)}(m)-
                           \binom{-h}{j-h-r}
                           p_{-h}(m)u_{r}^{(j-h-r)}\Big)\partial^{1-j}\nonumber \\
                      &\mbox{\hspace{-2cm}}=\mbox{\hspace{-2.4cm}}&\sum_{j\geq 1} A_{j,h}p_{-h}(m)\partial^{1-j} .\nonumber
\end{eqnarray}
Note that the summation above starts actually from $2$, because $A_{j,h}|_{j=1}=A_{1,1}=0$.
By comparing  with
\begin{eqnarray*}
L_{t_{m}}=\sum_{j\geq 2} u_{j,t_{m}}\partial^{1-j},
\end{eqnarray*}
we obtain (\ref{floweqwithp}).
\end{proof}

 Introducing left-multiplication operator $``\longleftarrow"$ is the key to derive the general
expression (\ref{floweqwithp}) as the flow equations of the ncKP hierarchy. It is also the striking difference from the
classical KP hierarchy. At present it seems to be not necessary to introduce such a weird operator, but in
Section 4 we shall see the advantage of such expression in order to obtain recursion operator.
We calculate some examples of  $A_{j,h}$,
\begin{eqnarray*}
A_{j,j}&=& 0\;,\\A_{j,j-1}&=& \partial\;,\\A_{j,j-2}&=&u_{2}-\stackrel{\leftarrow}{u_{2}}\;,\\
A_{j,j-3}&=& -u_{2}\partial
-(3-j)\stackrel{\leftarrow}{u_{2,x}}+u_{3}-\stackrel{\leftarrow}{u_{3}}\;
.\nonumber
\end{eqnarray*}

Since $p_{j}(m)$'s are uniquely
determined as functions of $u_2,u_3,\cdots ,u_{m-j}$, precisely
\begin{eqnarray}\label{diffploy}
p_{j}(m)=mu_{m-j}+f_{jm}(u_2,u_3,\cdots ,u_{m-j-1})
\end{eqnarray}
with certain differential polynomials $f_{jm}$ in $u_2,u_3,\cdots
,u_{m-j-1}$. Inserting these into (\ref{floweqwithp}), we obtain the flow
equations for the infinite numbers of dynamical coordinates ($u_2,u_3,\cdots$).
Taking the examples of  $A_{j,h}$ above into
(\ref{floweqwithp}), we have the first three flow equations:
\begin{eqnarray*}
u_{2,t_{m}} &=& p_{-1,x}(m),\\
u_{3,t_{m}}&=& u_{2}p_{-1}(m)-p_{-1}(m)u_{2}+p_{-2,x}(m) \\
           &=&[u_2,p_{-1}(m)]+p_{-2,x}(m), \\
u_{4,t_{m}} &=&-u_{2} p_{-1,x}(m)+ p_{-1}(m)u_{2,x}+u_{3}p_{-1}(m)\\
              & & {}- p_{-1}(m)u_{3} +u_{2}p_{-2}(m)- p_{-2}(m)u_{2}+
              p_{-3,x}(m),\\
          &=&-u_{2} p_{-1,x}(m)+p_{-1}(m)u_{2,x}+[u_3,p_{-1}(m)]+[u_2,p_{-2}(m)]+p_{-3,x}(m).
\end{eqnarray*}

It is not hard to observe that left-multiplication operator $``\longleftarrow"$ plays a key role to produce the
commutator in the  flow equations, which introduces the main difference  between  KP hierarchy and ncKP
hierarchy. For the flow equations (\ref{floweqwithp}), the drawback of this
expression is that it includes the coordinates
($p_{m-2}(m),p_{m-3}(m),p_{m-4}(m), \cdots$), which have to be calculated
independently. We shall give a general expression of $p_j(m)$ in the next section.

  Now we consider the so-called $n$-reduction, i.e. for some
fixed natural number $n$, we require
\begin{eqnarray}
L^{n}=L^{n}_{+}.
\end{eqnarray}

The condition (2.8) is equivalent  to requiring $p_{j}(n)=0$
 for $j<0$. Thanks to (2.7) we can recursively express
 all ordinates $u_{k}$ with $k>n$  in terms of
 ($u_2,u_3,\cdots,u_n$). Then, in the case of  $n$-reduction, only the
 first $n-1$ dynamical coordinates ($u_2,u_3,\cdots,u_n$) are
 independent, and the other two families  of coordinates also become finite: ($p_{0}(n),p_{1}(n),\cdots ,p_{n-2}(n)$) and ($q_{0}(n),q_{1}(n),\cdots ,q_{n-2}(n)$).
For any fixed $n$ we can insert
 $p_{j}(n)$ given by (2.7) (now depending only on
 ($u_2,u_3,\cdots,u_n$)) into (2.5), such that (2.5) defines a closed time-evolution system of ($u_2,u_3,\cdots,u_n$).

Thus we can write the Lax equations of $n$-reduction in
the finite matrix form
\begin{eqnarray}\label{Untmflow}
U(n)_{t_{m}}=K(n,m)=A(n)P(n,m)
\end{eqnarray}
where
\begin{eqnarray*}
 U(n)&=& \begin{pmatrix} u_2 \\ u_3 \\ \vdots \\ u_n  \end{pmatrix},\\
P(n,m)&=& \begin{pmatrix} p_{-1}(m) \\ p_{-2}(m)\\
\vdots \\p_{-n+1}(m) \end{pmatrix}, \\
A(n)&=& \begin{pmatrix}
A_{21} & 0 & 0 & \cdots & 0 \\
A_{31} & A_{32} & 0 & \cdots & 0 \\
\vdots & \vdots & \vdots & \ddots & \vdots\\
A_{n-1,1} & A_{n-1,2} & A_{n-1,3} & \cdots & 0\\
A_{n,1} & A_{n,2} & A_{n,3} & \cdots & A_{n,n-1}
\end{pmatrix} .
\end{eqnarray*}

\sectionnew{A more direct method to the flow equations of ncKP}
   We have obtained a terse formula of the flow equations (2.5)  in the preceding section, however, $p_{j}(m)$
are not readily expressed by ($u_2,u_3,\cdots$). In other words, it is not an explicit representation of the
flow equations of the ncKP hierarchy, and cannot concretely express the commutators of $\{u_i\}$ to show the
difference between KP hierarchy and ncKP hierarchy. We approach to the problem in another way, which is inspired by\cite{dajun}.

First of all, we have the following:
\begin{lemma} Set $m\geq 2$,and let all constants of integral be zero .
The coefficients ($p_{0}(m),p_{1}(m),\cdots
 ,p_{m-2}(m)$) of $L_{+}^{m}$ can be expressed as:
 \begin{eqnarray*}
p_{m-2}(m)&=& mu_2,\\
p_{m-3}(m)&=& mu_3+\binom{m}{2}u_{2,x},\\
 p_{m-k}(m)&=&
\sum_{l=1}^{k-1} \binom{m}{l}u_{k-l+1}^{(l-1)} \\ & &
{}+\int \sum_{s=2}^{k-2}\sum_{j=1}^{k-s-1}\Big(\binom{m-s}{j}p_{m-s}(m)u_{k-s-j+1}^{(j)}\\&
&
{}-(-1)^{k-s-j}\binom{k-s-1}{k-s-j}u_{j+1}p_{m-s}^{(k-s-j)}(m)\Big)\,dx\\
& &
{}+\int \sum_{s=2}^{k-1}\Big(p_{m-s}(m)u_{k-s+1}-u_{k-s+1}p_{m-s}(m)\Big)\,dx\;,k=4,\cdots,m.
\end{eqnarray*}
\end{lemma}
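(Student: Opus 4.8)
The plan is to derive the coefficients $p_{m-k}(m)$ recursively by exploiting the relation that produces $L^m$ from the Lax operator, specifically by computing $L^m$ as $L \cdot L^{m-1}$ or by using the fact that the coefficients must satisfy a consistency condition coming from $L^m = L_+^m + L_-^m$. First I would write $L^m = \sum_{j\leq m} p_j(m)\partial^j$ and express the multiplication $L \cdot L^m$ explicitly using the generalized Leibniz rule (2.2), keeping all noncommutative coefficients on the left of $\partial$ as the paper emphasizes. This yields a formula for the coefficients of $L^{m+1}$ in terms of those of $L^m$, but since I want $p_{m-k}(m)$ directly in terms of $(u_2,u_3,\dots)$, the cleaner route is to recognize that $p_j(m)$ can be computed from the defining identity $[L^m, L] = 0$ (since $L^m$ commutes with $L$), or equivalently from the recursion that $L^m$ satisfies.

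The key computational step is to set up the equation $L \cdot L^m = L^m \cdot L$ and extract the coefficient of each power $\partial^{1-k+m}$ (or an appropriate shift). Writing $L = \sum_{r\geq 0} u_r \partial^{1-r}$ and $L^m = \sum_{s\geq 0} p_{m-s}(m)\partial^{m-s}$, I would compute both products using (2.2), collect terms at a fixed total order, and obtain an identity relating $p_{m-k}(m)$ to the lower-index coefficients $p_{m-s}(m)$ with $s < k$ together with the $u_i$. The commutator structure $\binom{m-s}{j}p_{m-s}(m)u_{k-s-j+1}^{(j)} - (-1)^{k-s-j}\binom{k-s-1}{k-s-j}u_{j+1}p_{m-s}^{(k-s-j)}(m)$ and the explicit commutator term $p_{m-s}(m)u_{k-s+1} - u_{k-s+1}p_{m-s}(m)$ in the statement strongly suggest that the $\partial^{-1}$ piece of the commutator forces an integration, which is exactly where the $\int \cdots\,dx$ appears. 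I would identify the coefficient of $\partial^{m-k+1}$ in the commutator $[L, L^m]=0$; the part of that equation involving $p_{m-k}(m)$ undifferentiated comes with a $\partial$ (from the leading $\partial$ in $L$), so solving for $p_{m-k}(m)$ requires inverting $\partial$, i.e. integrating, which produces the nonlocal terms while the purely local leading terms $\sum_{l=1}^{k-1}\binom{m}{l}u_{k-l+1}^{(l-1)}$ come from the top part of the product.

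The base cases $p_{m-2}(m)=mu_2$ and $p_{m-3}(m)=mu_3+\binom{m}{2}u_{2,x}$ I would verify directly by expanding $L^m$ to the first two nontrivial orders, which also fixes the normalization and confirms that $p_m(m)=1$, $p_{m-1}(m)=0$ as already noted. These serve both as an anchor for the induction and as a consistency check on the general formula; the general step then follows by substituting the inductive hypothesis into the collected coefficient equation and carefully matching the binomial coefficients and the differentiation orders. The bookkeeping of shifting summation indices and splitting the double sum into the three displayed pieces (the local leading sum, the integrated double sum over $s$ and $j$, and the integrated commutator sum) must be done so that the ranges $2\leq s\leq k-2$, $1\leq j\leq k-s-1$ for the first integral and $2\leq s\leq k-1$ for the second match exactly.

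The main obstacle I anticipate is the careful treatment of the integration (the inversion of $\partial$) and the attendant sign and binomial identities: because $R$ is noncommutative, the order of factors inside each term must be preserved throughout, and the emergence of the factor $(-1)^{k-s-j}$ together with the binomial $\binom{k-s-1}{k-s-j}$ (rather than a $\binom{m}{\cdot}$ or $\binom{-h}{\cdot}$) indicates a nontrivial reorganization of the Leibniz expansion, likely using the identity relating $\binom{-h}{j}$ to $(-1)^j\binom{h+j-1}{j}$ to convert the negative-upper-index binomials arising from $\partial^{m-s}$ acting on the right-hand factors into the displayed positive-index form. Making these index manipulations rigorous while keeping the noncommutative factor ordering intact — and verifying that the constants of integration can consistently be set to zero — is the delicate heart of the argument.
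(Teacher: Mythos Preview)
Your approach is essentially the same as the paper's: the paper derives the key identity $[\partial, L_+^m] = [L_+^m, L_-]_+$ from $[L_+^m, L]_+ = 0$ (which is exactly the nonnegative-order content of your $[L, L^m]=0$), then expands both sides with the Leibniz rule, compares coefficients of $\partial^{m-k}$, and integrates to solve for $p_{m-k}(m)$. The only refinement you are missing is that by passing to $L_+^m$ and $L_-$ first, the paper isolates $p_{m-k,x}(m)$ on the left immediately and automatically restricts the right-hand side to the already-known $p_{m-s}(m)$ with $s<k$, so no separate argument is needed to exclude negative-index coefficients.
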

\begin{proof}
 Taking the non-negative part on both sides of Lax equations (\ref{Laxeq}), we have
$ [L_{+}^{m},L]_{+}=0$. It follows that
\begin{eqnarray} \label{calculatepjm}
[\partial, L^m_+]=[L^m_+,L_{-}]_+
\end{eqnarray}
Substituting $L_{+}^{m} = \sum_{j=0}^m p_{j}(m)\partial^{j} $ into
\begin{eqnarray*}
\sum_{k=2}^{m}p_{m-k,x}(m)\partial^{m-k} &=&
\sum_{k=2}^{m}\sum_{l=1}^{k-1}
\binom{m}{l}u_{k-l+1}^{(l)}\partial^{m-k}\\ & & {}+
\sum_{k=4}^{m}\sum_{s=2}^{k-2}\sum_{j=1}^{k-s-1}\Big(\binom{m-s}{j}p_{m-s}(m)u_{k-s-j+1}^{(j)}\\
& &
{}-(-1)^{k-s-j}\binom{k-s-1}{k-s-j}u_{j+1}p_{m-s}^{(k-s-j)}(m)\Big)\partial^{m-k}\\
& &
{}+\sum_{k=3}^{m}\sum_{s=2}^{k-1}(p_{m-s}(m)u_{k-s+1}-u_{k-s+1}p_{m-s}(m))\partial^{m-k}\;,
\end{eqnarray*}
comparing the coefficients of $\partial^{m-k}$ on two sides of the equation, and integrating them,
we obtain the expression of
($p_{0}(m),p_{1}(m),\cdots
 ,p_{m-2}(m)$).\\
\end{proof}
{\bf Remark 1.\ } We would like to stress that $p_j(m)$ given in lemma 3.1 are differential
polynomials of \{$u_2,u_3,\cdots ,u_{m-j}$\}, which are the same as Eq.(\ref{diffploy}). In other words, they are
local although their expressions include integration. This is important to assure that the flow equations
of the ncKP hierarchy are local in Section 4. Here we write  $\partial^{-1}u =\int u \,dx $. Since  $\partial \partial^{-1}=\partial^{-1}\partial=1$, we have $\int u_{x} \,dx=u$.
Take $m=4$ for example, we have
\begin{eqnarray*}
p_{2}(4)&=& 4u_2,\\
p_{1}(4)&=& 4u_3+6u_{2,x},\\
 p_{0}(4)&=& 4u_4+6u_{3,x}+4u_{2,xx}+\int (2p_{2}(4)u_{2,x}+u_2p_{2,x}(4))\,dx\\
          & & +\int (p_{2}(4)u_{3}-u_3p_{2}(4)+p_{1}(4)u_{2}-u_2p_{1}(4))\,dx)\,dx\\
          &=& 4u_4+6u_{3,x}+4u_{2,xx}+\int (8u_{2}u_{2,x}+4u_2u_{2,x}+6u_{2,x}u_2-6u_2u_{2,x})\,dx\\
          &=& 4u_4+6u_{3,x}+4u_{2,xx}+6u^{2}_{2}.
\end{eqnarray*}
{\bf  Remark 2. } Comparing with the results of the KP hierarchy\cite{dajun},  $p_{m-k}(m)$ in lemma 3.1
have additional parts given by commutators of $[ p_{m-s}(m),u_{k-s+1}]$.


We present some results of $L_{+}^{m}$ below:
\begin{eqnarray*}
L^{2}_{+} &=& \partial^{2}+2u_2 \;,\\
 L^{3}_{+} &=&
\partial^{3}+3u_2\partial+3u_3+3u_{2,x} \;,\\
 L^{4}_{+} &=&
\partial^{4}+4u_2\partial^{2}+(4u_3+6u_{2,x})\partial +4u_{4}+6u_{3,x}+4u_{2,xx} +6u^{2}_{2}\;,\\
 L^{5}_{+} &=&
\partial^{5}+5u_2\partial^{3}+(5u_3+10u_{2,x})\partial^{2} +(5u_{4}+10u_{3,x}+10u_{2,xx} +10u^{2}_{2})\partial \\
& &{}
+5u_{5}+10u_{4,x}+10u_{3,xx}+5u_{2,xxx}+20u_2u_3+10[u_3,u_2]+20u_2u_{2,x}+10[u_{2,x},u_2]\;,\\
L^{6}_{+} &=&
\partial^{6}+6u_2\partial^{4}+(6u_3+15u_{2,x})\partial^{3} +(6u_{4}+15u_{3,x}+20u_{2,xx}+15u^{2}_{2})\partial^{2}\\
& & {}
+(6u_{5}+15u_{4,x}+20u_{3,xx}+15u_{2,xxx}+15[u_2,u_3]+30u_3u_2+25[u_2,u_{2,x}]+45u_{2,x}u_2)\partial\;\\
& & {}
+6u_{6}+15u_{5,x}+20u_{4,xx}+15u_{3,xxx}+6u_{2,xxxx}+15u^{2}_3+25u^{2}_{2,x}+20u^{3}_2+20[u_2,u_{2,xx}]\\
& &
{}+35u_{2,xx}u_2+15[u_2,u_4]+30u_4u_2+20[u_{2,x},u_3]+30u_3u_{2,x}+25[u_2,u_{3,x}]+45u_{3,x}u_2\;.
\end{eqnarray*}

 Now we come to one of the main results.
\begin{theorem} Set $h=1,2,\cdots,m=1,2,\cdots$, then the flow equations of the ncKP hierarchy can be
explicitly expressed as
\begin{eqnarray}
u_{h+1,t_{m}}&=& \sum_{l=1}^{m}
\binom{m}{l}u_{m+h-l+1}^{(l)}+\sum_{s=2}^{m-1}\sum_{l=1}^{m-s}\binom{m-s}{l}p_{m-s}(m)u_{m+h-s-l+1}^{(l)} \nonumber\\
 & & {}-\sum_{s=2}^{m}\sum_{j=1}^{h+m-s-1}(-1)^{h+m-s-j}\binom{h+m-s-1}{h+m-s-j}u_{j+1}p_{m-s}^{(h+m-s-j)}(m)
 \nonumber \\
& & {}+\sum_{s=2}^{m}(p_{m-s}(m)u_{m-s+h+1}-u_{m-s+h+1}p_{m-s}(m)). \label{explictfloweqs}
\end{eqnarray}
\end{theorem}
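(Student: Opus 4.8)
The plan is to compute the right-hand side of the Lax equation (\ref{Laxeq}) directly, paralleling the proof of Lemma 3.1 but extracting the negative part rather than the non-negative part. Since $L_{t_m}=\sum_{j\geq 2}u_{j,t_m}\partial^{1-j}$ is purely of negative order and $u_{h+1,t_m}$ is exactly the coefficient of $\partial^{-h}=\partial^{1-(h+1)}$, it suffices to identify the coefficient of $\partial^{-h}$ in $[L_+^m,L]$. Writing $L=\partial+L_-$ with $L_-=\sum_{r\geq 2}u_r\partial^{1-r}$, the piece $[L_+^m,\partial]=-\sum_j p_{j,x}(m)\partial^j$ is non-negative order and therefore contributes nothing; hence I would compute the coefficient of $\partial^{-h}$ in $[L_+^m,L_-]=L_+^mL_--L_-L_+^m$.

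The core step is to expand both products by the generalized Leibniz rule (2.2), keeping each noncommutative coefficient on the side of $\partial$ dictated by whether it sits in $L_+^m$ or in $L_-$. With $L_+^m=\sum_{s=0}^m p_{m-s}(m)\partial^{m-s}$ this gives
\begin{eqnarray*}
L_+^mL_- &=& \sum_{s\geq0}\sum_{r\geq2}\sum_{l\geq0}\binom{m-s}{l}p_{m-s}(m)u_r^{(l)}\partial^{m-s-l+1-r},\\
L_-L_+^m &=& \sum_{r\geq2}\sum_{s\geq0}\sum_{l\geq0}\binom{1-r}{l}u_rp_{m-s}^{(l)}(m)\partial^{m-s-l+1-r}.
\end{eqnarray*}
In each triple sum the coefficient of $\partial^{-h}$ is isolated by imposing $l=m+h+1-s-r$, after which I would organize the surviving terms by the value of $s$ and by the derivative order $l$.

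The resulting terms split exactly into the four pieces of (\ref{explictfloweqs}). The contribution $s=0$ (where $p_m(m)=1$) comes only through $L_+^mL_-$, and because the $l=0$ part cancels against $L_-L_+^m$ (equivalently, $[\partial^m,\cdot]$ annihilates the undifferentiated term), it produces $\sum_{l=1}^m\binom{m}{l}u_{m+h-l+1}^{(l)}$, the first term; the case $s=1$ drops out since $p_{m-1}(m)=0$. For $s\geq 2$ the terms of $L_+^mL_-$ with $l\geq1$ give the second term, the ranges $1\leq l\leq m-s$ and $2\leq s\leq m-1$ being forced by $\binom{m-s}{l}$, while the $l=0$ parts of the two products assemble into the commutator $\sum_{s=2}^m(p_{m-s}(m)u_{m-s+h+1}-u_{m-s+h+1}p_{m-s}(m))$, the fourth term. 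Finally the $l\geq1$ terms of $L_-L_+^m$ give the third term after rewriting the negative upper index through $\binom{1-r}{l}=(-1)^l\binom{r+l-2}{l}$ and reindexing $j=r-1$: then $r+l-2=h+m-s-1$ and the sign $(-1)^l=(-1)^{h+m-s-j}$ match the claimed coefficient term by term.

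I expect the main obstacle to be bookkeeping rather than conceptual: three summation indices must be tracked simultaneously, the boundary cases $s=0$ and $s=1$ handled separately, and, most importantly for the noncommutative setting, the $l=0$ contributions of the two products must be shown to combine into a genuine commutator instead of cancelling, which is precisely the feature distinguishing the ncKP flows from the classical ones. Care is also needed to keep every $p_{m-s}(m)$ to the left of its $u$ factor and every $u$ to the left of its $p$ factor so that no commutator is inadvertently symmetrized, and to apply the upper-negation binomial identity with the correct integer ranges so that the boundary contributions $s=m$ and $l=m-s$ carry the right multiplicity.
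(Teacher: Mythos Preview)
Your proposal is correct and follows exactly the approach indicated by the paper, whose proof is the single sentence ``Putting $L_{+}^{m}=\sum_{j=0}^{m}p_{j}(m)\partial^{j}$ back into the Lax equations, we obtain the explicit expression of flow equations.'' You have simply carried out this substitution in full detail: expanding $[L_{+}^{m},L_{-}]$ via the generalized Leibniz rule, separating the cases $s=0$, $s=1$, and $s\geq 2$, and identifying the $l=0$ pieces as the commutator term, which is precisely the computation the paper leaves to the reader.
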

\begin{proof}
Putting $L_{+}^{m} = \sum_{j=0}^m p_{j}(m)\partial^{j} $ back into the Lax equations, we obtain the explicit expression of flow equations (\ref{explictfloweqs}).
\end{proof}

{\bf Remark 3.} From the explicit expression of the flow equations
(\ref{explictfloweqs}) we can see the additional terms in the flow equations of the ncKP hierarchy are given by commutators of \{$u_i$\} compared with the commutative version\cite{dajun}. These additional commutators in $u_{h+1,t_{m}}$ come from $[p_{m-s}(m),u_{m-s+h+1}]$ and the commutators in $p_{m-s}(m)$. If $R$ is
commutative, the flow equations reduce to the familiar commutative case\cite{dajun}.

By the flow equations (\ref{explictfloweqs}), we can readily use computer algebra to generate higher-order
flow equations. We present some
flow equations calculated by Maple:
\begin{eqnarray}
u_{2,t_{2}}&=& 2u_{3,x}+u_{2,xx}\;,\label{computatekp1} \\
u_{3,t_{2}}&=& 2u_{4,x}+u_{3,xx}+2u_2u_{2,x}+2[u_{2},u_3]\;,\label{computatekp2}\\
u_{4,t_{2}}&=& 2u_{5,x}+u_{4,xx}+4u_3u_{2,x}-2u_2u_{2,xx}+2[u_{2},u_4]\;,\nonumber\\
u_{2,t_{3}}&=& 3u_{4,x}+3u_{3,xx}+u_{2,xxx}+6u_2u_{2,x}+3[u_{2,x},u_2]\;,\label{computatekp3}\\
u_{3,t_{3}}&=& 3u_{5,x}+3u_{4,xx}+u_{3,xxx}+6u_2u_{3,x}+6u_{2,x}u_3+3[u_3,u_{2,x}]+3[u_{2},u_4]\;,\nonumber\\
u_{4,t_{3}}&=& 3u_{6,x}+3u_{5,xx}+u_{4,xxx}+3u_2u_{4,x}+3[u_{2,x},u_4]+9u_4u_{2,x}+6u_3u_{3,x}\nonumber\\
 & & {}-3u_3u_{2,xx}-3u_2u_{3,xx}+3[u_2,u_5]+3[u_{3},u_4]\;,\nonumber\\
 u_{2,t_{4}}&=& 4u_{5,x}+6u_{4,xx}+4u_{3,xxx}+u_{2,xxxx}+12u_3u_{2,x}+6[u_{2,x},u_3]+6[u_2,u_{3,x}]+12u_{3,x}u_2\nonumber\\
 & & {}+4u_{2,xx}u_2+6u^{2}_{2,x}+2u_2u_{2,xx}\nonumber\;,\nonumber\\
 u_{2,t_{5}}&=& 5u_{6,x}+10u_{5,xx}+10u_{4,xxx}+5u_{3,xxxx}+u_{2,xxxxx}+10u_{2,xx}u_3+10u_{2}u_{2,x}u_2\nonumber\\
 & & {}+10u_2u_{2,xxx}+5[u_{2,xxx},u_2]+20u_{2,x}u_{2,xx}+10[u_{2,xx},u_{2,x}]+20u_{3,xx}u_{2}+10[u_2,u_{3,xx}]\nonumber\\
 & & {}+30u_{3,x}u_{2,x}+20[u_{2,x},u_{3,x}]+20u_{2,x}u^{2}_{2}+10[u^{2}_2,u_{2,x}]+20u_{3}u_{3,x}+10[u_{3,x},u_{3}]\nonumber\\
 & & {}+20u_{4}u_{2,x}+10[u_{2,x},u_{4}]+20u_{4,x}u_{2}+10[u_{2},u_{4,x}]\;. \label{5thflowkp}
\end{eqnarray}

    Let us compute the first non-trivial equation of the ncKP hierarchy as a nonlinear partial differential equation.
After setting $t_{2}=y\;,t_{3}=t\;,u_{x}=u_2$, and eliminating $u_3,u_4$ from the flows (\ref{computatekp1}),
(\ref{computatekp2}) and (\ref{computatekp3}), we get the ncKP equation
\begin{eqnarray}\label{nckpeq}
3u_{yy}-6[u_{x},u_{y}]=(4u_{t}-u_{xxx}-6u^{2}_{x})_{x}.
\end{eqnarray}
If $u$ is independent of $y$, $u=u_{x}$, the ncKP equation reduce to the ncKdV equation
\begin{eqnarray}\label{ncKdV}
4u_{t}=u_{xxx}+6uu_{x}+6u_{x}u\; .
\end{eqnarray}

 Alternatively, we can obtain the ncKdV equation from the 2-reduction. There is only one independent dynamical
 coordinate $u_2$ under the assumption of 2-reduction, and then
 \begin{eqnarray*}
u_3 &=& -\frac{1}{2}u_{2,x}\;,\\
u_4 &=& \frac{1}{4}u_{2,xx}-\frac{1}{2}u^{2}_2\;,\\
u_5 &=& -\frac{1}{8}u_{2,xxx}+u_2u_{2,x}+\frac{1}{2}u_{2,x}u_2\;,\\
u_6 &=& \frac{1}{16}u_{2,xxxx}-\frac{11}{8}u_{2,x}^{2}-\frac{11}{8}u_2u_{2,xx}-\frac{3}{8}u_{2,xx}u_2+\frac{1}{2}u_{2}^{3}\;,\\
    &\vdots  &
\end{eqnarray*}
Substituting $u_3,u_4$ into (\ref{computatekp3}) we get the ncKdV
equation (\ref{ncKdV})where $u=u_2$. Substituting $u_3,u_4,u_5,u_6$
into (\ref{5thflowkp}) we derive the fifth order ncKdV equation:
\begin{eqnarray} \label{5thncKdVeq}
u_{t}&=&\frac{1}{16}u_{xxxxx}+\frac{5}{8}uu_{xxx}+\frac{5}{8}uu_{xxx}+\frac{5}{4}u_{xx}u_{x}+\frac{5}{4}u_{x}u_{xx}
\nonumber \\
& &+\frac{5}{2}(u^{2}u_{x}+uu_{x}u+u_{x}u^{2})
\end{eqnarray}

It is hard to find a relation between the third-order ncKdV (\ref{ncKdV}) and the fifth-order ncKdV
(\ref{5thncKdVeq}). In Section 4 we shall obtain a recursion operator for the ncKdV hierarchy.
With it at hand, we shall be able to calculate the higher-order flows  of the ncKdV hierarchy including the ncKdV equation (\ref{ncKdV}), the fifth-order ncKdV equation(\ref{5thncKdVeq}) in an elegant way.

\sectionnew{Recursion Operator of noncommutative KP's flow equations }
The aim of this section is to obtain a recursion operator mapping the lower-order flow equations to higher-order flow equations under the general assumption of $n$-reduction, and a byproduct is the solution to the open problem proposed by
P.J. Olver and V.V. Sokolov\cite{olver}.

Inspired by but different from \cite{Oevel}, we shall use the identity $(L^{m})(L^{n})=(L^{n})(L^{m})$. That is the key to surmount the difficulty of noncommutativity.
The main result of this section is
\begin{theorem}
The Lax equations of the ncKP hierarchy under $n$-reduction in the matrix form (2.9)
possess a recursion operator $\Phi(n)$ such that
\begin{eqnarray}\label{recursion}
U(n)_{t_{m+kn}}&=& K(n,m+kn) \nonumber \\
               &=& \Phi(n)K(n,m+(k-1)n) \nonumber \\
               &=& \Phi^{2}(n)K(n,m+(k-2)n)\nonumber  \\
               & \vdots & \nonumber  \\
                &=& \Phi^{k}(n)K(n,m)=\Phi^{k}(n)U(n)_{t_{m}}, \label{recursionoperator}
\end{eqnarray} for all integers $k=0,1,2,\cdots$ and $m,n\in N$.
   \end{theorem}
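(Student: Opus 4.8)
The plan is to build $\Phi(n)$ out of the single identity $L^{m}L^{n}=L^{n}L^{m}$, which under $n$-reduction turns multiplication of a flow by the differential operator $B:=L^{n}=L^{n}_{+}$ into the index shift $m\mapsto m+n$, and then to translate the resulting relation between pseudo-differential operators into a matrix relation between the flow vectors $K(n,m)=A(n)P(n,m)$.

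First I would fix the $n$-reduction, so that $B:=L^{n}=L^{n}_{+}$ is a purely differential operator of order $n$ obeying $[L,B]=0$. Starting from $L_{t_{m+n}}=[L^{m+n}_{+},L]$ and writing $L^{m+n}=L^{m}B$, I split $L^{m+n}_{+}=L^{m}_{+}B+(L^{m}_{-}B)_{+}$ and use $[B,L]=0$ together with $[L^{m}_{+},L]=L_{t_{m}}$ to obtain the master relation
$$L_{t_{m+n}}=L_{t_{m}}B+[(L^{m}_{-}B)_{+},L].$$
Commuting $B$ to the other side via $L^{m}L^{n}=L^{n}L^{m}$ gives the twin form $L_{t_{m+n}}=B\,L_{t_{m}}+[(BL^{m}_{-})_{+},L]$; keeping both forms is exactly what allows the noncommutative coefficients to be transported through $B$ consistently, and is the role played by the commutativity identity emphasised above.

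Next I would convert this to the vector level. The term $L_{t_{m}}B$ is a local (differential) action on the coefficients of $L_{t_{m}}$. The commutator term is controlled by $L^{m}_{-}=\sum_{h\ge1}p_{-h}(m)\partial^{-h}$, i.e. by the $p_{-h}(m)$. Here I invoke the correspondence $L_{t_{m}}\leftrightarrow P(n,m)$ from Theorem 2.2: reading off the coefficients of $\partial^{-1},\dots,\partial^{-(n-1)}$ in $L_{t_{m}}$ reproduces $K(n,m)=A(n)P(n,m)$, and since $A(n)=\partial\,I+N$ with $N$ strictly lower triangular (because $A_{j,j-1}=\partial$), it is invertible with $A(n)^{-1}=\sum_{k=0}^{n-2}(-\partial^{-1}N)^{k}\partial^{-1}$, a finite sum of (nonlocal) matrix pseudo-differential operators, the finiteness coming from nilpotency of $N$. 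Thus $P(n,m)=A(n)^{-1}K(n,m)$, and the whole commutator term can be rewritten as a nonlocal operator applied to $K(n,m)$. Projecting the master relation onto $\partial^{-1},\dots,\partial^{-(n-1)}$ and collecting the two contributions into one matrix operator defines $\Phi(n)$ with $K(n,m+n)=\Phi(n)K(n,m)$. Since $B$ and $A(n)$ depend only on $u_{2},\dots,u_{n}$ and not on $m$, the operator $\Phi(n)$ is $m$-independent, so the iteration $K(n,m+kn)=\Phi^{k}(n)K(n,m)$ follows by induction on $k$, the case $k=0$ being trivial and each step merely replacing $m$ by $m+n$.

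The hard part will be showing that the recursion actually closes on the finite vector $P(n,m)=(p_{-1}(m),\dots,p_{-(n-1)}(m))$. Expanding $(L^{m}_{-}B)_{+}$ shows that it also involves $p_{-n}(m)$, one coefficient beyond $P(n,m)$, and this coefficient does survive into the projected commutator. I must therefore argue that $p_{-n}(m)$ is not independent: under $n$-reduction every $p_{-h}(m)$ is a differential polynomial in the finitely many coordinates $u_{2},\dots,u_{n}$, so $p_{-n}(m)$ is determined by $P(n,m)$ through $A(n)^{-1}$ and can be eliminated. Carrying out this elimination and checking that $A(n)^{-1}$ is a well-defined pseudo-differential operator is where the real computation lies; it is also the source of the nonlocal $\partial^{-1}$ terms in $\Phi(n)$. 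Finally I would remark that, although $\Phi(n)$ is nonlocal, every image $K(n,m+kn)$ is manifestly local by Theorem 3.2, so $\Phi(n)$ sends local flows to local flows -- which is precisely the statement needed to settle the Olver--Sokolov problem.
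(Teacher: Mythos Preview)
Your master identity $L_{t_{m+n}}=L_{t_m}B+[(L^{m}_{-}B)_{+},L]$ is correct, and the idea of iterating an $m$-independent operator is the right one. But the argument, as written, has two genuine gaps at exactly the point you flag as ``the hard part'', and the paper's proof takes a different and more explicit route that avoids both.

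\textbf{First gap.} Projecting $L_{t_m}B$ onto $\partial^{-1},\dots,\partial^{-(n-1)}$ does \emph{not} only involve $K(n,m)=(u_{2,t_m},\dots,u_{n,t_m})$. Since $B$ has order $n$, the coefficient of $\partial^{-k}$ in $L_{t_m}B$ picks up $u_{j,t_m}$ for $j$ as large as $n+k+1$, hence up to $2n$ when $k=n-1$. Under $n$-reduction these higher $u_{j,t_m}$ are linear in $K(n,m)$ only through the Fr\'echet derivatives of the constraints $u_j=F_j(u_2,\dots,u_n)$; you never mention this, and the phrase ``a local (differential) action on the coefficients of $L_{t_m}$'' hides a nontrivial reduction step that is itself $n$-dependent.

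\textbf{Second gap.} Your elimination of $p_{-n}(m)$ does not go through as stated. That $p_{-n}(m)$ is a differential polynomial in $u_2,\dots,u_n$ is true but irrelevant: every $p_{-h}(m)$ is such a polynomial, with a different polynomial for each $m$. What is needed is an $m$-independent linear operator $\Lambda$ with $p_{-n}(m)=\Lambda\,P(n,m)$ for all $m$, and $A(n)^{-1}$ cannot supply this, since $A(n)^{-1}$ only recovers $p_{-1}(m),\dots,p_{-(n-1)}(m)$ from $K(n,m)$ and says nothing about $p_{-n}(m)$.

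The paper bypasses both problems by working one level below $K$, directly with the $p_j(m+n)$. It expands $(L^{m}L^{n})_{-}$ to obtain
\[
P(n,m+n)=S(n)P(n,m)+T(n)\bigl(p_{-n}(m),\dots,p_{-2n+1}(m)\bigr)^{T}
\]
(Lemma~4.2), so that $n$ extra coefficients appear, not one. It then expands $(L^{n}L^{m})_{-}$ to get a second formula for the same $p_j(m+n)$; equating the two expressions for $j=-1,\dots,-n$ yields $n$ linear relations whose coefficient matrix $M(n)$ is lower triangular with $-n\partial$ on the diagonal, hence explicitly invertible in $\partial^{-1}$ (Lemma~4.3). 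This is precisely where the commutativity $L^{m}L^{n}=L^{n}L^{m}$ is cashed in: not to ``transport coefficients through $B$'', but to manufacture the extra equations that eliminate $(p_{-n},\dots,p_{-2n+1})$. One then sets $R(n)=S(n)-T(n)M(n)^{-1}N(n)$ and $\Phi(n)=A(n)R(n)A(n)^{-1}$. No chain-rule reduction of higher $u_{j,t_m}$ is ever needed, because the computation stays entirely in the $p$-coordinates.

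In short, your ``twin forms'' point toward the right mechanism, but you invoke them only rhetorically; the paper uses the two orderings $L^{m}L^{n}$ and $L^{n}L^{m}$ at the level of $(L^{m+n})_{-}$ to produce exactly the system of equations that performs the elimination, and that concrete step is what your proposal is missing.
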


In order to prove this, let us prove two lemmas at first:
\begin{lemma}
  There exist $(n-1)\times(n-1)$ matrix $S(n)$ and $(n-1)\times n$ matrix
$T(n)$ such that
 \begin{eqnarray}\label{Pmplusn1}
P(n,m+n)=S(n)P(n,m)+T(n)\begin{pmatrix}p_{-n}(m)
\\\vdots \\ p_{-2n+1}(m)\end{pmatrix}
\end{eqnarray}
   \end{lemma}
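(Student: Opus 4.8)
The plan is to exploit the factorization $L^{m+n}=L^{n}L^{m}$, which holds because all powers of a fixed $L$ commute and which, under $n$-reduction, reads $L^{m+n}=L_{+}^{n}L^{m}$ with $L_{+}^{n}=\sum_{i=0}^{n}p_{i}(n)\partial^{i}$ a purely differential operator of order $n$ whose coefficients do not depend on $m$. Since $P(n,m+n)$ is built from the coefficients $p_{-h}(m+n)$ for $h=1,\dots,n-1$, it suffices to read off the coefficient of $\partial^{-h}$ in the product $L_{+}^{n}L^{m}$ and to show that only $p_{-1}(m),\dots,p_{-(2n-1)}(m)$ can enter.

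First I would expand the product, keeping every noncommutative coefficient on the left of $\partial$ as in Section 2, so that
\begin{eqnarray*}
L_{+}^{n}L^{m}=\sum_{i=0}^{n}\sum_{j\le m}\sum_{\alpha\ge 0}\binom{i}{\alpha}\,p_{i}(n)\,p_{j}^{(\alpha)}(m)\,\partial^{\,i+j-\alpha}.
\end{eqnarray*}
To isolate $p_{-h}(m+n)$ I set $i+j-\alpha=-h$, i.e. $\alpha=i+j+h$. The crucial observation is that, since $i$ is a nonnegative integer, $\binom{i}{\alpha}=0$ whenever $\alpha>i$; hence every surviving term has $\alpha\le i$, which forces $j\le -h$. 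Together with $\alpha\ge 0$, i.e. $i\ge -j-h$, and $i\le n$, the index $j$ is confined to $-n-h\le j\le -h$. Letting $h$ run from $1$ to $n-1$ this means exactly $-(2n-1)\le j\le -1$, so only the coefficients $p_{-1}(m),\dots,p_{-(2n-1)}(m)$ occur, which is precisely what is needed to split the right-hand side into a piece depending on $P(n,m)$ and a piece depending on the tail $\big(p_{-n}(m),\dots,p_{-2n+1}(m)\big)^{\mathrm{T}}$.

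Next I would assemble the result into matrix form. The total contribution to $p_{-h}(m+n)$ coming from a fixed $p_{j}(m)$ is $\sum_{i}\binom{i}{i+j+h}\,p_{i}(n)\,p_{j}^{(i+j+h)}(m)$, which I read as a single left-acting differential operator $D_{h,j}:=\sum_{i}\binom{i}{i+j+h}\,p_{i}(n)\,\partial^{\,i+j+h}$ applied to $p_{j}(m)$; since $L_{+}^{n}$ does not involve $m$, neither does any $D_{h,j}$. Setting $(S(n))_{h,h'}=D_{h,-h'}$ for $1\le h'\le n-1$ and $(T(n))_{h,h''}=D_{h,-(n+h''-1)}$ for $1\le h''\le n$ produces matrices of the stated sizes $(n-1)\times(n-1)$ and $(n-1)\times n$ and yields the identity (\ref{Pmplusn1}).

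The main obstacle is the bookkeeping under noncommutativity: I must never transpose $p_{i}(n)$ past $\partial$ or past $p_{j}(m)$, so the entries of $S(n)$ and $T(n)$ are genuinely \emph{left}-acting differential operators with operator-valued coefficients, and the whole Leibniz expansion has to be organized in the left-coefficient normal form throughout. Once that normal form is fixed, the remaining work—verifying that the admissible ranges of $i,j,\alpha$ are exactly those allowed by the vanishing of the binomial coefficients and that the truncation lands precisely at $-(2n-1)$—is a finite index computation with no hidden difficulty.
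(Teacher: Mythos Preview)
Your argument is correct, but it takes the opposite factorization from the paper. You compute the negative part of $L^{m+n}=L_{+}^{n}L^{m}$, which produces entries $D_{h,j}=\sum_{i}\binom{i}{i+j+h}p_{i}(n)\partial^{\,i+j+h}$ that are ordinary left-acting differential operators; in the paper's notation these are precisely the operators $\tilde{C}_{s}(n)=\sum_{r=0}^{n-s}\binom{r+s}{r}p_{s+r}(n)\partial^{r}$ that appear in the \emph{next} lemma. The paper instead expands $(L^{m}L^{n})_{-}$, so that the coefficients $p_{\alpha}(m)$ sit on the left and the Leibniz derivatives land on $p_{k}(n)$; the resulting entries $C_{js}(n)=\sum_{k}\binom{j-s}{k-s}\stackrel{\longleftarrow}{p_{k}^{(k-s)}(n)}$ involve the left-multiplication operator rather than $\partial$. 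Both choices establish the existence statement of the lemma. Your route is arguably cleaner here since it avoids the arrow operator, but the paper's choice is deliberate: in Lemma~4.3 the authors equate the two expansions of $p_{j}(m+n)$ coming from $L^{m}L^{n}$ and $L^{n}L^{m}$ to eliminate the tail $(p_{-n}(m),\dots,p_{-2n+1}(m))^{T}$. If you use $L^{n}L^{m}$ already here, that comparison becomes vacuous and you would have to swap the roles, using $L^{m}L^{n}$ (and hence the left-multiplication operator) for the second relation.
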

\begin{proof} 
We calculate the negative part of $(L^{m+n})_{-}=((L^{m})(L^{n}))_{-}$, which is given by
\begin{eqnarray*}
((L^{m})(L^{n}))_{-} &=& (\sum_{\alpha\leq-1}
p_{\alpha}(m)\partial^{\alpha}\sum_{j=0}^{n}
p_{j}(n)\partial^{j})_{-}\\
                      &=& (\sum_{\alpha\leq-1}\sum_{k=0}^{n}
p_{\alpha}(m)\partial^{\alpha}
p_{k}(n)\partial^{k})_{-}\\
                      &=& (\sum_{\alpha\leq-1}\sum_{\beta\geq
                      0}\sum_{k=0}^{n}\binom{\alpha}{\beta} p_{\alpha}(m) p_{k}^{(\beta)}(n)\partial^{\alpha+k-\beta})_{-}\\
&=&
\sum_{j\leq-1}\sum_{s=j+1}^{n}\sum_{k=max(0,s)}^{n}\binom{j-s}{k-s}
p_{j-s}(m)
p_{k}^{(k-s)}(n)\partial^{j}\\
\end{eqnarray*}
Comparing with $((L^{m+n}))_{-}= \sum_{j\leq -1}
p_{j}(m+n)\partial^{j}$ yields
\begin{eqnarray}\label{pjmplusn1}
p_{j}(m+n)=\sum_{s=j+1}^{n}C_{js}(n)p_{j-s}(m)\;,j\leq -1
\end{eqnarray} where 
\begin{equation}\label{nccijoperator1}
C_{js}(n)=\sum_{k=max(0,s)}^{n}\binom{j-s}{k-s} \stackrel{\longleftarrow}{p_{k}^{(k-s)}(n)};\,j\leq -1\;,s=j+1,j+2,\cdots,n.
\end{equation}

In particular,from $p_{n}(n)=1 \;, p_{n-1}(n)=0$ we have
\begin{eqnarray}\label{cjn1}
C_{jn}(n)=1\;,C_{j,n-1}(n)=0.\end{eqnarray} Introducing the
$(n-1)\times(n-1)$ matrix $S(n)$ and the $(n-1)\times n$ matrix
$T(n)$ $$S(n)=\begin{pmatrix}
C_{-1,0}(n) & C_{-1,1}(n) &  \cdots & C_{-1,n-2}(n)\\
C_{-2,-1}(n) & C_{-2,0}(n) & \cdots & C_{-2,n-3}(n)\\
\vdots & \vdots & \ddots & \vdots\\
C_{-n+1,-n+2}(n) & C_{-n+1,-n+3}(n)&  \cdots & C_{-n+1,0}(n)
\end{pmatrix} $$\\
$$T(n)=\begin{pmatrix}
C_{-1,n-1}(n) & C_{-1,n}(n) &  0 & \cdots & 0\\
C_{-2,n-2}(n) & C_{-2,n-1}(n) &  C_{-2,n}(n) & \cdots & 0\\
\vdots & \vdots & \ddots & \vdots\\
C_{-n+2,2}(n) & C_{-n+2,3}(n)&  C_{-n+2,4}(n)& \cdots & 0\\
C_{-n+1,1}(n) & C_{-n+1,2}(n)& C_{-n+1,3}(n)&  \cdots &
C_{-n+1,n}(n)
\end{pmatrix}. $$\\

Equations (\ref{pjmplusn1}) for $j=-1,\cdots,-n+1$ are written as
\begin{eqnarray*}\label{Pmplusn}
P(n,m+n)=S(n)P(n,m)+T(n)\begin{pmatrix}p_{-n}(m)
\\\vdots \\ p_{-2n+1}(m)\end{pmatrix}.
\end{eqnarray*}
\end{proof}
%
We now try to substitute the vector
$(p_{-n}(m),\cdots,p_{-2n+1}(m))^{T}$ in terms of the vector
$P(n,m)$ . This is
\begin{lemma}
  There exist $n\times n$ invertible matrix $M(n)$ and $n\times (n-1)$ matrix
$N(n)$ such that
 \begin{eqnarray}\label{MNrelation1}
\begin{pmatrix} p_{-n}(m)
\\p_{-n-1}(m)\\\vdots \\p_{-2n+2}(m)\\ p_{-2n+1}(m)
\end{pmatrix}=-M(n)^{-1}N(n)\begin{pmatrix} p_{-1}(m)
\\p_{-2}(m)\\\vdots \\p_{-n+2}(m)\\ p_{-n+1}(m)
\end{pmatrix}.
\end{eqnarray}
\end{lemma}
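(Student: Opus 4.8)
The plan is to exploit commutativity of the powers of $L$ together with the reduction condition, exactly in the spirit of the identity $(L^{m})(L^{n})=(L^{n})(L^{m})$ that drives this section, but now to extract relations among the negative coefficients $p_{-h}(m)$ of a single $L^{m}$. First I would note that, since $[L^{m},L^{n}]=0$ and $L^{n}=L_{+}^{n}$ under $n$-reduction, splitting $L^{m}=L_{+}^{m}+L_{-}^{m}$ gives $[L_{+}^{m},L_{+}^{n}]+[L_{-}^{m},L_{+}^{n}]=0$. Because $[L_{+}^{m},L_{+}^{n}]$ is a commutator of differential operators it has no negative part, so taking the negative part of the whole equation yields the key relation $(L_{-}^{m}L_{+}^{n})_{-}=(L_{+}^{n}L_{-}^{m})_{-}$. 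This is the analogue here of the step that surmounts noncommutativity: the two orderings $L_{-}^{m}L_{+}^{n}$ and $L_{+}^{n}L_{-}^{m}$ are genuinely different, so equality of their negative parts is a nontrivial constraint purely on the $p_{-h}(m)$.

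Next I would expand both products with the generalized Leibniz rule, writing $L_{-}^{m}=\sum_{h\ge 1}p_{-h}(m)\partial^{-h}$ and $L_{+}^{n}=\sum_{k=0}^{n}p_{k}(n)\partial^{k}$, and then equate coefficients of $\partial^{i}$ for $i=-1,-2,\dots,-n$. On the left the functions $p_{-h}(m)$ occur undifferentiated and right-multiplied by derivatives of $p_{k}(n)$, while on the right they occur differentiated and left-multiplied by $p_{k}(n)$; in operator form these assemble into a linear system whose entries involve $\partial$, the left-multiplication operator $\stackrel{\longleftarrow}{}$, and the known quantities $p_{k}(n)$.

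The decisive structural feature to establish is triangularity together with a cancellation. At the power $\partial^{i}$ the most negative coefficient that could occur is $p_{-(n-i)}(m)$, but its undifferentiated contribution on each side is exactly $1\cdot p_{-(n-i)}(m)$ (from $k=n$, using $p_{n}(n)=1$), so it cancels; moreover, since $p_{n-1}(n)=0$ and $p_{n}(n)$ is constant, the undifferentiated $p_{-(n-i-1)}(m)$ also has vanishing coefficient. Hence the genuine leading term at $\partial^{i}$ is the derivative term $-n\,\partial\,p_{-(n-i-1)}(m)$. Reading the equations for $i=-1,\dots,-n$ in order, the leading unknowns are $p_{-n}(m),p_{-(n+1)}(m),\dots,p_{-(2n-1)}(m)$, i.e. precisely the components of $(p_{-n}(m),\dots,p_{-2n+1}(m))^{T}$, while $p_{-1}(m),\dots,p_{-n+1}(m)$ enter as the known block $P(n,m)$. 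Crucially, the same cancellation applied at $i=-n$ removes $p_{-2n}(m)$, so nothing beyond the block survives and the system closes. Collecting the coefficients of the unknown block into an $n\times n$ matrix $M(n)$ and those of $P(n,m)$ into an $n\times(n-1)$ matrix $N(n)$ gives $M(n)(p_{-n}(m),\dots,p_{-2n+1}(m))^{T}+N(n)P(n,m)=0$, which is the asserted identity once $M(n)$ is inverted.

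Finally I would verify invertibility of $M(n)$: by the triangular structure it is lower triangular with each diagonal entry the first-order operator $n\partial$ (up to sign), whose formal inverse $\tfrac{1}{n}\partial^{-1}$ exists in the algebra of pseudo-differential operators, so $M(n)^{-1}$ exists, is nonlocal, and yields $(p_{-n}(m),\dots,p_{-2n+1}(m))^{T}=-M(n)^{-1}N(n)P(n,m)$. I expect the main obstacle to be bookkeeping rather than conceptual: one must carry the left/right multiplication orderings and the binomial factors through the Leibniz expansion correctly to pin down the exact operator entries of $M(n)$ and $N(n)$, and one must confirm that the two cancellations (of $p_{-(n-i)}(m)$ and of the undifferentiated $p_{-(n-i-1)}(m)$) hold uniformly in $i$, since these are exactly what guarantee triangularity, closure of the system, and invertibility of the diagonal.
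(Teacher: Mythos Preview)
Your proposal is correct and follows essentially the same route as the paper. The paper computes $p_{j}(m+n)$ in two ways---once from $(L^{m}L^{n})_{-}$ (Lemma~4.2) and once from $(L^{n}L^{m})_{-}$---and equates them for $j=-1,\dots,-n$; your identity $(L_{-}^{m}L_{+}^{n})_{-}=(L_{+}^{n}L_{-}^{m})_{-}$ is exactly this equality with the intermediate $p_{j}(m+n)$ suppressed, and the cancellations and triangular structure you describe coincide with the paper's explicit matrices $M(n)$ (lower triangular with diagonal $-n\partial$) and $N(n)$.
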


\begin{proof} Since $L^{m+n}=L^nL^m$, its negative part can be obtained by
\begin{eqnarray*}
((L^{n})(L^{m}))_{-} &=& (\sum_{\alpha=0}^{n}\sum_{\beta\leq-1}
p_{\alpha}(n)\partial^{\alpha}
p_{\beta}(m)\partial^{\beta})_{-}\\
                      &=& (\sum_{\alpha=0}^{n}\sum_{r=0}^{\alpha}\sum_{\beta\leq-1}\binom{\alpha}{r}
p_{\alpha}(n)
p_{\beta}^{(r)}(m)\partial^{\alpha+\beta-r})_{-}\\
                     &=&  (\sum_{s=0}^{n}\sum_{r=0}^{n-s}\sum_{\beta\leq-1}\binom{s+r}{r}
p_{s+r}(n)
p_{\beta}^{(r)}(m)\partial^{s+\beta})_{-}\\
&=& \sum_{j\leq-1}\sum_{s=0}^{n}\sum_{r=0}^{n-s}\binom{s+r}{r}
p_{s+r}(n)
p_{j-s}^{(r)}(m)\partial^{j}\\
\end{eqnarray*}
Comparing with $(L^{m+n})_{-}= \sum_{j\leq -1}
p_{j}(m+n)\partial^{j}$ yields
\begin{eqnarray}\label{pjmplusn2}
p_{j}(m+n)=\sum_{s=0}^{n}\tilde{C}_{s}(n)p_{j-s}(m)\;, j\leq -1\end{eqnarray}
where 
\begin{eqnarray}\label{nccijoperator2}
\tilde{C}_{s}(n)=\sum_{r=0}^{n-s}\binom{r+s}{r} p_{s+r}(n)\partial^{r}\;,0\leq s\leq n.
\end{eqnarray}
Again, from $p_{n}(n)=1 \;, p_{n-1}(n)=0$ we have, in particular,
\begin{eqnarray}\label{cjn2}
\tilde{C}_{n}(n)=1\;,\tilde{C}_{n-1}(n)=-n\partial\end{eqnarray}
Employing (\ref{pjmplusn1})=(\ref{pjmplusn2}) for $j=-1,\cdots,-n$ and observing (\ref{cjn1}) and
(\ref{cjn2}),  i.e.
\begin{eqnarray*}
\begin{pmatrix}
p_{-1}(m+n)
\\p_{-2}(m+n)\\\vdots \\p_{-n+1}(m+n)\\ p_{-n}(m+n)
\end{pmatrix} &=& \begin{pmatrix}
C_{-1,0}(n) & C_{-1,1}(n) &   \cdots & C_{-1,n-3}(n) & C_{-1,n-2}(n) \\
C_{-2,-1}(n) & C_{-2,0}(n) &   \cdots & C_{-2,n-4}(n) & C_{-2,n-3}(n) \\
\vdots & \vdots & \ddots & \vdots& \vdots\\
C_{-n+1,-n+2}(n) & C_{-n+1,-n+3}(n)&   \cdots & C_{-n+1,-1}(n)& C_{-n+1,0}(n)\\
C_{-n,-n+1}(n) & C_{-n,-n+2}(n)&  \cdots & C_{-n,-2}(n)&
C_{-n,-1}(n)\end{pmatrix}\\& &{}\times \begin{pmatrix} p_{-1}(m)
\\p_{-2}(m)\\\vdots \\p_{-n+2}(m)\\ p_{-n+1}(m)
\end{pmatrix}+ \begin{pmatrix}
0 & 0 &   \cdots & 0 & 0 \\
C_{-2,n-2}(n) & 0 &   \cdots & 0 & 0 \\
\vdots & \vdots & \ddots & \vdots& \vdots\\
C_{-n+1,1}(n) & C_{-n+1,2}(n)&   \cdots & 0& 0\\
C_{-n,0}(n) & C_{-n,1}(n)&  \cdots & C_{-n,n-2}(n)& 0\end{pmatrix}\\& &{}\times
\begin{pmatrix} p_{-n}(m)
\\p_{-n-1}(m)\\\vdots \\p_{-2n+2}(m)\\ p_{-2n+1}(m)
\end{pmatrix}+\begin{pmatrix} p_{-n-1}(m)
\\p_{-n-2}(m)\\\vdots \\p_{-2n+1}(m)\\ p_{-2n}(m)
\end{pmatrix}\\
&=& \begin{pmatrix}
\tilde{C}_{0}(n) & \tilde{C}_{1}(n) &   \cdots & \tilde{C}_{n-3}(n) & \tilde{C}_{n-2}(n) \\
0 & \tilde{C}_{0}(n)  &   \cdots & \tilde{C}_{n-4}(n)  & \tilde{C}_{n-3}(n)  \\
\vdots & \vdots & \ddots & \vdots & \vdots\\
0 & 0 &   \cdots & 0 & \tilde{C}_{0}(n)\\
0 & 0 &  \cdots & 0 & 0 \end{pmatrix}\times
\begin{pmatrix} p_{-1}(m)
\\p_{-2}(m)\\\vdots \\p_{-n+2}(m)\\ p_{-n+1}(m)
\end{pmatrix}\\ & &{}+ \begin{pmatrix}
-n\partial & 0 &   \cdots & 0 & 0 \\
\tilde{C}_{n-2}(n) & -n\partial &   \cdots & 0 & 0 \\
\vdots & \vdots & \ddots & \vdots & \vdots \\
\tilde{C}_{1}(n) & \tilde{C}_{2}(n) &   \cdots & -n\partial & 0\\
\tilde{C}_{0}(n) & \tilde{C}_{1}(n) &  \cdots & \tilde{C}_{n-2}(n) &
-n\partial  \end{pmatrix} \times
\begin{pmatrix} p_{-n}(m)
\\p_{-n-1}(m)\\\vdots \\p_{-2n+2}(m)\\ p_{-2n+1}(m)
\end{pmatrix}
\\ & &{}+\begin{pmatrix} p_{-n-1}(m)
\\p_{-n-2}(m)\\\vdots \\p_{-2n+1}(m)\\ p_{-2n}(m)
\end{pmatrix},
\end{eqnarray*}
we obtain
\begin{eqnarray}\label{MNrelation}
M(n)\begin{pmatrix} p_{-n}(m)
\\p_{-n-1}(m)\\\vdots \\p_{-2n+2}(m)\\ p_{-2n+1}(m)
\end{pmatrix}=-N(n)\begin{pmatrix} p_{-1}(m)
\\p_{-2}(m)\\\vdots \\p_{-n+2}(m)\\ p_{-n+1}(m)
\end{pmatrix}.
\end{eqnarray} Here the  $n\times n$ matrix $M(n)$ and the
$n\times(n-1)$ matrix $N(n)$ given by
\begin{eqnarray*}
M(n)&=& \begin{pmatrix}
-n\partial & 0 &   \cdots & 0 & 0 \\
D_{-2,n-2}(n) & -n\partial &   \cdots & 0 & 0 \\
\vdots & \vdots & \ddots & \vdots & \vdots \\
D_{-n+1,1}(n) & D_{-n+1,2}(n) &   \cdots & -n\partial & 0\\
D_{-n,0}(n) & D_{-n,1}(n) &  \cdots & D_{-n,n-2}(n) & -n\partial
\end{pmatrix}\\
N(n)&=& \begin{pmatrix}
D_{-1,0}(n) & D_{-1,1}(n) &   \cdots & D_{-1,n-3}(n) & D_{-1,n-2}(n) \\
C_{-2,-1}(n) & D_{-2,0}(n) &   \cdots & D_{-2,n-4}(n) & D_{-2,n-3}(n) \\
\vdots & \vdots & \ddots & \vdots& \vdots\\
C_{-n+1,-n+2}(n) & C_{-n+1,-n+3}(n)&   \cdots & D_{-n+1,-1}(n)& D_{-n+1,0}(n)\\
C_{-n,-n+1}(n) & C_{-n,-n+2}(n)&  \cdots & C_{-n,-2}(n)&
C_{-n,-1}(n)\end{pmatrix}
\end{eqnarray*}
with $$D_{j,s}=C_{j,s}-\tilde{C}_{s}(n).$$
 Note that entries of the
matrix $M(n)$ contain the left-multiplication operator, but $M(n)$
can be explicitly inverted in terms of the integration operator $\partial^{-1}$. By multiplying $M^{-1}$ on both sides of
(\ref{MNrelation}), we get (\ref{MNrelation1}).
\end{proof}
{\bf Proof of Theorem 4.1 :\  }
\begin{proof}
Taking  $(p_{-n}(m),\cdots,p_{-2n+1}(m))^{T}$ from
(\ref{MNrelation1}) back into (\ref{Pmplusn1}), we obtain a vital
recursion formula,
 \begin{eqnarray}\label{recursionf}
 P(n,m+n)=R(n)P(n,m)
\end{eqnarray}
 with
$$R(n)=S(n)-T(n)M(n)^{-1}N(n)\;.$$
Defining
$$\Phi(n):=A(n)R(n)A(n)^{-1}\;,$$
and  using  the definition of $U(n)$ and
 recursion formula (\ref{recursionf}), we obtain the recursion scheme of the flows equations under
 $n$-reduction,
\begin{eqnarray}
U(n)_{t_{m+kn}}&=& K(n,m+kn)=A(n)P(n,m+kn) \nonumber\\
                &=& A(n)R(n)P(n,m+(k-1)n)  \nonumber\\
               &=&  A(n)R(n)A(n)^{-1}A(n)P(n,m+(k-1)n)\nonumber\\
               &=&\Phi(n)A(n)P(n,m+(k-1)n)\nonumber  \\
                &=&  \Phi(n)K(n,m+(k-1)n)\nonumber\\
               &=& \Phi^{2}(n)K(n,m+(k-2)n)\nonumber\\
               & \vdots &\nonumber \\
                &=& \Phi^{k}(n)K(n,m) \nonumber
\end{eqnarray} for all integers $k=0,1,2,\cdots$ and $m,n\in N$.
\end{proof}

The integro-differential operator given by
the matrix $\Phi(n)$ is a recursion operator for Lax equations of the ncKP hierarchy under $n$-reduction. The flow equations commute in the noncommutative circumstances\cite{hamanaka2}, and
$\Phi(n)$ maps each flow equation in (\ref{Untmflow}) into a higher-order one.
Below let us see an example.

{\bf Example.\ } 2-reduction of the ncKP hierarchy generates the ncKdV hierarchy.

 The Lax operator is
 $$ L^{2}=L^{2}_{+}=\partial^{2}+2u $$
 For brevity we write $u=u_2$, then
\begin{eqnarray*}
A(2)&=&\partial  \\A(2)^{-1}&=& \partial^{-1}\\
S(2)&=&\stackrel{\leftarrow}{u}\\T(2)&=&(0,1)\\
M(2)&=&\begin{pmatrix}
-2\partial & 0  \\
2\stackrel{\leftarrow}{u}-2u-\partial^{2} & -2\partial
\end{pmatrix}\\
N(2)&=& \begin{pmatrix}
2\stackrel{\leftarrow}{u}-2u-\partial^{2}  \\
2\stackrel{\leftarrow}{u_{x}}
\end{pmatrix}\\
M(2)^{-1}&=& \begin{pmatrix}
-\frac{1}{2}\partial^{-1} & 0  \\
\frac{1}{4}-\frac{1}{2}\partial^{-1}(\stackrel{\leftarrow}{u}-2u)\partial^{-1}-\partial^{2}
& -\frac{1}{2}\partial^{-1}
\end{pmatrix}\\
R(2)&=& S(2)-T(2)M(2)^{-1}N(2)\\
     &=& \frac{1}{4}\partial^{2}+2\stackrel{\leftarrow}{u}-\partial^{-1}\stackrel{\leftarrow}{u_{x}}
-\frac{1}{2}[(\stackrel{\leftarrow}{u}-u)+\partial^{-1}(\stackrel{\leftarrow}{u}-u)\partial]\\
& &
{}+\partial^{-1}(\stackrel{\leftarrow}{u}-u)\partial^{-1}(\stackrel{\leftarrow}{u}-u)\\
\Phi(2)&=& A(2)R(2)A(2)^{-1}\\
   &=&\frac{1}{4}\partial^{2}+2\partial \stackrel{\leftarrow}{u}\partial^{-1}-\stackrel{\leftarrow}{u_{x}}\partial^{-1}
-\frac{1}{2}[\partial(\stackrel{\leftarrow}{u}-u)\partial^{-1}+(\stackrel{\leftarrow}{u}-u)]\\
& &
{}+(\stackrel{\leftarrow}{u}-u)\partial^{-1}(\stackrel{\leftarrow}{u}-u)\partial^{-1}\\
\end{eqnarray*}
Taking $k=m=1$ in (\ref{recursion}) we come back to the ncKdV equation:
\begin{eqnarray*}
u_{t}&=&\Phi(2)u_x=
\frac{1}{4}u_{xxx}+\frac{3}{2}(uu_{x}+u_{x}u)\;.
\end{eqnarray*}
Applying the recursion operator $\Phi(2)$ once again ,we return to
the fifth ncKdV equation:
\begin{eqnarray*}
u_{t}&=& \Phi(2)^2u_x= \Phi(2)(\frac{1}{4}u_{xxx}+\frac{3}{2}(uu_{x}+u_{x}u))\\
     &=&
    \frac{1}{16}u_{xxxxx}+\frac{5}{8}uu_{xxx}+\frac{5}{8}uu_{xxxx}+\frac{5}{4}u_{xx}u_{x}+\frac{5}{4}u_{x}u_{xx}+\frac{5}{2}(u^{2}u_{x}+uu_{x}u+u_{x}u^{2})\;.
\end{eqnarray*}

{\bf Remark 4.\ } This approach to get higher-order flows is more convenient than the method in
Section 3. After a rescaling we recover the recursion operator obtained in \cite{olver}. Since
our approach is based on the flow equations under $n$-reduction,
$\Phi(2)$ maps lower-order flows to higher-order flows. There is no doubt that $\Phi(2)$ produces a hierarchy of local, mutually commuting, higher-order flows, although $\Phi(2)$ is nonlocal itself. Actually the higher-order flows generated by $\Phi(2)$ from $u_x$  are
automatically local, they are the same as the flow equations
given by the Eq.(\ref{explictfloweqs}) under 2-reduction, since we obtain the recursion operator
$\Phi(2)$ from Lax
equations (\ref{Laxeq}), which are local. Thus we solve the open problem suggested by P.J. Olver and V.V. Sokolov
in \cite{olver}. We can elucidate it as follows. Our form of ncKdV equation
is
\begin{eqnarray*}
u_{t}= \frac{1}{4}u_{xxx}+\frac{3}{2}(uu_{x}+u_{x}u)\;.
\end{eqnarray*}
After the rescaling:
\begin{eqnarray*}
u &\rightarrow&2u\\
 x &\rightarrow& \frac{1}{2}x\\
t &\rightarrow& \frac{1}{2}t\;.
\end{eqnarray*}
we recover the form in \cite{olver}:
\begin{eqnarray*}
u_{t}= u_{xxx}+3(uu_{x}+u_{x}u)\;.
\end{eqnarray*}
The recursion operator in \cite{olver} is:
\begin{eqnarray*}
\Re=
D^{2}_{x}+2A_{u}+A_{u_{x}}D^{-1}_{x}+C_{u}D^{-1}_{x}C_{u}D^{-1}_{x}\;.
\end{eqnarray*}
where $D_{x}=\partial,A_u(v)=uv+vu,C_u(v)=[u,v].$

Then we have:
\begin{eqnarray*}
\Phi(2)v &=&
\frac{1}{4}v_{xx}+(uv+vu)+\frac{1}{2}(u_{x}(\partial^{-1}v)+(\partial^{-1}v)u_{x})+[u,\partial^{-1}[u,(\partial^{-1}v)]]\;,\\
\Re v &=&
v_{xx}+(uv+vu)+(u_{x}(\partial^{-1}v)+(\partial^{-1}v)u_{x})+[u,\partial^{-1}[u,(\partial^{-1}v)]]\;.
\end{eqnarray*}
The difference between $\Phi(2)v $ and $\Re v $ is exactly the scaling transformations above!

\sectionnew{Conclusion Remarks}
We have derived the  explicit expressions of the flow equations (\ref{explictfloweqs}) of  the ncKP hierarchy.
This result confirms a widely recognized conjecture that additional
terms in the flow equations of the ncKP hierarchy are given by
commutators of \{$u_i$\} compared with the commutative
version\cite{dajun}.
The recursion operator $\Phi(n)$ of the ncKP
hierarchy under $n$-reduction is given in Theorem 4.1.  For
2-reduction, $\Phi(2)$ is the recursion operator of the ncKdV hierarchy, which is consistent with the result of \cite{olver} by a scaling transformation. Here  $\Phi(2)$ generates automatically higher-order local flows from  $u_x$, thus we solve the open
problem proposed by P.J. Olver and V.V. Sokolov\cite{olver}.

W. Strampp and W. Oevel used the family of
coordinates $$(q_{m-2}(m),q_{m-3}(m),\cdots)$$ to express the flow equations of the classical KP hierarchy, and they obtained the corresponding recursion operator\cite{Oevel}. Why do we change the family of coordinates to $$(p_{m-2}(m),p_{m-3}(m),\cdots)?$$  This is one important feature of our paper. There are two
reasons: one is that we can express the flow equations of the ncKP hierarchy in a simpler fashion, the other reason is that we can derive the recursion operator of the ncKP hierarchy without the introduction of adjoint operator for the ncKP hierarchy, since the adjoint operator in the general noncommutative case has not been well defined as far as we know. It seems to us that due to the same obstacle there is no simple generalization of $\tau$ function and B-type hierarchy like Sato's theory \cite{sato} for the ncKP hierarchy.

\vspace{3ex}
{\bf Acknowledgments.} {\small This work is supported by NSF of China
under grant number 10971109 and K.C.Wong Magna Fund in Ningbo University.
He is also supported by the Program for NCET under Grant No.NCET-08-0515.
We sincerely thanks Professors Li Yishen and Cheng Yi(USTC, China) for their support and encouragement.}




\begin{thebibliography}{99}
\bibitem{lax} Peter D. Lax, Integrals of Nonlinear Equations of Evolution and Solitary Waves,
Commun.Pure Appl.Math XXI(1968),467-490.
\bibitem{wadati1974}M. Wadati, T. Kamijo, On the extension of inverse scattering method,
Prog. Theor. Phys. 52(1974), 397-414.
\bibitem{calogero}F.Calogero, A.Degasperis,  Coupled Nonlinear  Evolution Equations
Solvable via the Inverse Spectral Transform, and Solitons that Come Back: the Boomeron
Lett.Nuovo.Cimento.16(1976), 425-433.
\bibitem{Gelfand1} P. Etingof, I.M. Gelfand and V.S. Retakh, Factorization of differential operators,
quasideterminants, and nonabelian Toda field equations, Math. Res.
Lett. 4 (1997), 413-425[q-alg/9701008].
\bibitem{Gelfand2}
P. Etingof, I. Gelfand and V. Retakh, Nonabelian integrable systems,
quasideterminants and Marchenko Lemma, Math. Res. Lett. 5(1998),1-12[q-alg/9707017].
\bibitem{Takasaki}
Kanehisa Takasaki, Nonabelian KP hierarchy with Moyal algebraic
coefficients  [hep-th/9305169].
\bibitem{olver}
P.J.Olver, V.V.Sokolov, Integrable Evolution Equations on
Associative Algebras, Commun.Math.Phys. 193 (1998), 245-268.

\bibitem{fokas}
I.Y. Dorfman, A. S. Fokas, Hamiltonian theory over noncommutative
rings and integrability in multidimension, J. Math. Phys. 33(1992),
2504-2514.
\bibitem{Kupershmid}
B. Kupershmidt, KP or mKP : noncommutative mathematics of
lagrangian, hamiltonian, and integrable systems(American
Mathematical Society, Providence, RI, 2000).
\bibitem{Paniak}
L. D. Paniak,Exact Noncommutative KP and KdV Multi-solitons,
hep-th/0105185.

\bibitem{zheng1}
Zheng Zhong, "The noncommutative KP hierarchy and its restrictions",
PhD thesis (University of Science and Technology of China(in
Chinese)) 2002.
\bibitem{zheng}
Zheng Zhong, He Jingsong and Cheng Yi, Backlund transformation of
the noncommutative Gelfand-Dickey hierarchy, Journal of High Energy
Physics 02 (2004),1-10.
\bibitem{hamanaka1}
 M. Hamanaka and K. Toda, Towards noncommutative integrable equations,
hep-th/0309265.
\bibitem{hamanaka2}
M. Hamanaka, Commuting flows and conservation laws for
noncommutative Lax hierarchies, J. Math. Phys. 46 (2005) 052701,13
pages[hep-th/0311206].

\bibitem{Hoissen1}
A. Dimakis and F. Muller-Hoissen, Noncommutative Korteweg-de-Vries
equation[hep-th/0007074].

\bibitem{Hoissen2}
A. Dimakis and F. Muller-Hoissen, Extension of noncommutative
soliton hierarchies, J. Phys. A 37 (2004), 4069-4084
[hep-th/0401142]; Explorations of the extended ncKP hierarchy, J.
Phys. A 37 (2004), 10899-10930 [hep-th/0406112].

\bibitem{wadati}
N. Wang and M. Wadati, Noncommutative KP hierarchy and Hirota
triple-product relations, J. Phys. Soc. Jap. 73 (2004),1689-1698.

\bibitem{Treves} F.Treves, Noncommutative KdV Hierarchy, Rev. Math. Phys.
19 (2007),677-724.

\bibitem{Nimmo}
C.R.Gilson and J.J.C Nimmo, On a direct approach to quasideterminant
solutions of a noncommutative KP equation, J. Phys. A
40 (2007),3839-3850.
\bibitem{Nimmo2}
 G.R.Gilson, J.J.Nimmo, C.M.Sooman,On a direct approach to quasideterminant solutions of a noncommutative modified KP
equation,J.Phys. A41(2008), 085202,10 pages.

\bibitem{manin}Yu. I. Manin, A. O. Radul,
A Supersymmetric Extension of the Kadomtsev-Petviashvili Hierarchy,
Commun. Math. Phys. 98(1985), 65-77.
\bibitem{mulase}
M. Mulase, Solvability of the super KP equation and a generalization
of the Birkhoff decomposition, Invent.math. 92 (1988), 1-46 .


\bibitem{olver1977} P.J. Olver, Evolution equations possessing infinitely many symmetries,
J. Math.Phys. 18(1977), 1212-1215.
\bibitem{olverbook}P. J. Olver, Applications of Lie groups to differential
equations (Springer-Verlag, New York, 1986).
\bibitem{magri} F. Magri, A simple model of the integrable Hamiltonian equation,
J.Math.Phys. 19 (1978),1156-1162

\bibitem{fokas1} A.S. Fokas and P.M. Santini, The Recursion Operator of the
Kadomtsev-Petviashvili Equation and the Squared Eigenfunction of the
Schr\"{o}dinger Operators, Stud. Appl. Math. 75(1986), 179-186.

\bibitem{fokas2}P.M. Santini and A.S. Fokas, Recursion Operators and
Bi-Hamiltonian Structures in Multidimensions I, Comm. Math. Phys.115(1988), 375-419.

\bibitem{fokas3} A.S. Fokas and P.M. Santini, Recursion Operators and
Bi-Hamiltonian Structures in Multidimensions II, Comm. Math.
Phys.116(1988), 449-474.

\bibitem{fokas4} A.S. Fokas and I.M. Gel'fand, Bi-Hamiltonian Structures and
Integrability, in Important Developments in Soliton Theory, ed. A.S.
Fokas and V.E. Zakharov(Springer-Verlag, Berlin, 1993),259-282.

\bibitem{sokolov}
M.Gurses, A.Karasu, V.V.Sokolov, On construction of recursion
operator from Lax representation, J.Math.Phys. 40 (1999), 6473-6490
\bibitem{Oevel}
W.Strampp, W.Oevel, Recursion Operators and Hamiltonian Structures
in Sato's Theory, Lett.  Math. Phys. 20 (1990),195-210.

\bibitem{Carillo1}
S. Carillo, C. Schiebold, Noncommutative Korteweg-de
Vries and modified Korteweg-de Vries hierarchies via recursion
methods, J. Math. Phys. 50(2009), 073510

\bibitem{sato}
M.Sato, Y. Sato, ¡°Soliton equations as dynamical systems on
infinite dimensional Grassmann manifold¡±, in P.D. Lax, H. Fujita,
and G. Strang (eds.), Nonlinear Partial Differential Equations in
Applied Sciences,(North-Holland, Amsterdam 1982), 259-271.
\bibitem{dajun}
Dajun Zhang, Dengyuan Chen, Some General Formulas in the Sato
Theory, J. Phys. Soc. Jap. 72 (2003),448-449.






\end{thebibliography}
\end{document}